\documentclass[11pt]{article}

\usepackage[margin=1in]{geometry}
\usepackage{amsmath,amssymb,amsthm}
\usepackage{graphicx}
\usepackage{booktabs}
\usepackage{microtype}
\usepackage[hidelinks]{hyperref}
\usepackage{tikz}
\usepackage{caption}
\usepackage{subcaption}
\usepackage{algorithm}
\usepackage{algpseudocode}
\usepackage{float}
\usepackage{url}
\usepackage{enumitem}
\usepackage{natbib}
\usetikzlibrary{arrows.meta,positioning,shapes.geometric,calc}

\definecolor{polycol}{RGB}{70,110,180}
\definecolor{quantcol}{RGB}{200,120,40}
\definecolor{orthcol}{RGB}{60,150,80}
\definecolor{databg}{RGB}{245,245,248}
\definecolor{procbg}{RGB}{255,250,240}

\newtheorem{theorem}{Theorem}
\newtheorem{proposition}[theorem]{Proposition}
\newtheorem{remark}[theorem]{Remark}

\newcommand{\dxdt}{\dot{X}}
\newcommand{\Qperp}{Q_{\!\perp}}
\newcommand{\RR}{\mathbb{R}}
\newcommand{\Hmat}{\Theta}

\title{Q-SINDy: Quantum-Kernel Sparse Identification\\ of Nonlinear Dynamics with Provable Coefficient Debiasing}

\author{%
  Samrendra Roy\\
  Department of Nuclear, Plasma, and Radiological Engineering\\
  University of Illinois at Urbana-Champaign\\
  Urbana, IL, USA\\
  \texttt{roysam@illinois.edu}  \and
  Syed Bahauddin Alam\\
  Department of Nuclear, Plasma, and Radiological Engineering\\
  University of Illinois at Urbana-Champaign\\
  National Center for Supercomputing Applications\\
  Urbana, IL, USA
}

\date{}

\begin{document}
\maketitle

\begin{abstract}
Quantum feature maps offer expressive embeddings for classical learning tasks, and augmenting sparse identification of nonlinear dynamics (SINDy) with such features is a natural but unexplored direction. We introduce \textbf{Q-SINDy}, a quantum-kernel-augmented SINDy framework, and identify a specific failure mode that arises: \emph{coefficient cannibalization}, in which quantum features absorb coefficient mass that rightfully belongs to the polynomial basis, corrupting equation recovery. We derive the exact cannibalization-bias formula $\Delta\xi_P = (P^\top P)^{-1}P^\top Q\,\hat\xi_Q$ and prove that orthogonalizing quantum features against the polynomial column space at fit time eliminates this bias exactly. The claim is verified numerically to machine precision ($<10^{-12}$) on multiple systems. Empirically, across six canonical dynamical systems (Duffing, Van der Pol, Lorenz, Lotka-Volterra, cubic oscillator, R\"ossler) and three quantum feature map architectures (ZZ-angle encoding, IQP, data re-uploading), orthogonalized Q-SINDy consistently matches vanilla SINDy's structural recovery while uncorrected augmentation degrades true-positive rates by up to 100\%. A refined dynamics-aware diagnostic, $R^2_Q$ for $\dot X$, predicts cannibalization severity with statistical significance (Pearson $r=0.70$, $p=0.023$). An RBF classical-kernel control across 20 hyperparameter configurations fails more severely than any quantum variant, ruling out feature count as the cause. Orthogonalization remains robust under depolarizing hardware noise up to 2\% per gate, and the framework extends without modification to Burgers' equation.
\end{abstract}

\section{Introduction}
\label{sec:intro}
Sparse identification of nonlinear dynamics (SINDy) \citep{brunton2016sindy} has become a widely adopted framework for discovering governing equations of dynamical systems from time-series data. The method solves
\begin{equation}
\label{eq:sindy}
\min_{\Xi} \;\|\dxdt - \Hmat(X)\,\Xi\|_F^2 + \lambda\|\Xi\|_0,
\end{equation}
where $\Hmat$ is a pre-specified library of candidate functions and sparsity in $\Xi$ selects the active terms. Library choice is pivotal: too narrow a library misses true dynamics, too rich a library introduces spurious terms and numerical conditioning problems.

A natural extension is to \emph{augment} a baseline polynomial library $P$ with a learned or structured feature family $Q$, yielding $\Hmat = [P, Q]$. This mirrors a broader trend in scientific machine learning toward hybrid symbolic--learned representations, and parallels growing interest in quantum algorithms for engineering applications more generally~\citep{roy2025harnessing}. In the quantum machine learning literature, quantum feature maps \citep{havlicek2019supervised, schuld2019quantum, perez2020data} project classical data into exponentially high-dimensional Hilbert spaces via parameterized circuits, and have shown promise on classification tasks \citep{huang2021power}. Their application to \emph{equation discovery}, however, has been limited to specialized variants (quantum Hamiltonian discovery \citep{tateyama2026siqhdy}, quantum-circuit-based model discovery \citep{heim2021quantum}, physics-informed solvers \citep{paine2023physics}). The direct question---\emph{does augmenting classical SINDy with a quantum feature library improve equation discovery?}---has remained open.

\paragraph{Contributions.} This paper answers that question with a nuanced ``no, not naively, but yes with a geometric correction''. Our contributions:

\begin{enumerate}
\item We introduce \textbf{Q-SINDy}, a quantum-kernel-augmented SINDy method, and demonstrate that naive implementation suffers from a systematic failure mode we call \emph{coefficient cannibalization}: the polynomial coefficients of the ground-truth equation are systematically biased away from their vanilla SINDy values as quantum features absorb part of the explanatory power that rightfully belongs to the polynomial basis.
\item We derive the \emph{exact} coefficient-bias formula (Theorem~\ref{thm:bias}) and prove that orthogonalizing the quantum feature matrix against the polynomial column space at fit time eliminates the bias (Theorem~\ref{thm:orth}). Numerical verification confirms the theory to machine precision.
\item We introduce a \emph{dynamics-aware diagnostic} $R^2_Q$ (Section~\ref{sec:r2q}) that predicts cannibalization severity with statistical significance ($r=0.70$, $p=0.023$), improving substantially over a column-space overlap baseline ($r=0.55$, $p=0.098$).
\item Empirical validation spans six dynamical systems, three quantum feature map architectures (ZZ-angle, IQP, data re-uploading), hardware-noise simulation with depolarizing channels, and a PDE extension to Burgers' equation.
\end{enumerate}

\section{Related Work}
\label{sec:related}
\textbf{SINDy and extensions.} Beyond the original SINDy~\citep{brunton2016sindy}, PDE-FIND~\citep{rudy2017pdefind} extends the framework to partial differential equations, Weak-SINDy~\citep{messenger2021weak} replaces pointwise derivatives with Galerkin weak forms for noise robustness, SINDYc~\citep{kaiser2018sindy} adds control inputs, SINDy-autoencoder~\citep{champion2019data} couples SINDy with neural coordinate discovery, and SINDy-PI~\citep{kaheman2020sindy} handles implicit dynamics. PySINDy~\citep{desilva2020pysindy} provides the canonical implementation. All are complementary to our contribution, which concerns the library-construction step.

\textbf{Quantum feature maps.} \citet{havlicek2019supervised} introduced the canonical IQP-style feature map and kernel formulation. \citet{schuld2021effect} analyzed expressivity of data-encoded quantum models. \citet{perez2020data} proposed data re-uploading circuits. These were designed with classification in mind; their use as \emph{regression feature libraries} for equation discovery is the angle we pursue.

\textbf{Quantum machine learning for dynamics.} SIQHDy~\citep{tateyama2026siqhdy} performs sparse identification of \emph{quantum} Hamiltonian dynamics from measurement data---an orthogonal problem to ours, which identifies classical ODEs using quantum features as representation. QMoD~\citep{heim2021quantum} uses differentiable quantum circuits~\citep{mitarai2018quantum} for model discovery via variational optimization rather than sparse regression. \citet{paine2023physics} use quantum kernels to \emph{solve} differential equations rather than discover them. None of these address our setting, where quantum features augment a classical polynomial library for equation discovery with sparse regression.

\section{Methods}
\label{sec:methods}

\subsection{Q-SINDy: quantum-augmented equation discovery}
Given time-series $X \in \RR^{N\times d}$ sampled from an autonomous dynamical system, we assemble two feature libraries:
\begin{itemize}[leftmargin=*,topsep=0pt,itemsep=1pt]
\item a polynomial library $P \in \RR^{N\times p}$ with columns $\{1, x_i, x_i x_j, x_i x_j x_k, \dots\}$,
\item a \emph{quantum} library $Q \in \RR^{N\times q}$ whose columns are expectation values of a fixed observable set under the state prepared by a parametrized circuit $U_\phi(x)$.
\end{itemize}
Letting $\dxdt$ denote an estimate of the time derivative (finite differences with mild smoothing in our experiments), Q-SINDy solves
\begin{equation}
\label{eq:qsindy}
\min_{\Xi}\;\|\dxdt - [P\;\; Q]\,\Xi\|_F^2 + \lambda\|\Xi\|_0
\end{equation}
via sequentially thresholded least squares (STLSQ). The intent is for $Q$ to capture residual structure not expressible in $P$, sharpening recovery.

\subsection{Quantum feature maps}
\label{sec:fmaps}
We study three canonical feature-map designs, each mapping $x\in\RR^d$ to a set of expectation values $\langle O_k\rangle_{U_\phi(x)|0\rangle}$:

\textbf{ZZ angle-encoded} (2- or 3-qubit). Data-encoding rotations $R_X(x_i)$ followed by $R_{ZZ}(x_i x_j)$ pairwise entanglers and a second rotation layer $R_Y(x_i)$, measured in Pauli-$Z$, Pauli-$X$, and $ZZ/XX$ bases.

\textbf{IQP} \citep{havlicek2019supervised}. Two layers of $H^{\otimes d}$ followed by $R_Z(x_i)$ and $R_{ZZ}(x_i x_j)$, classically hard to simulate for sufficient depth.

\textbf{Data re-uploading} \citep{perez2020data}. Three interleaved layers of data encoding and fixed variational gates. Variational parameters are frozen so the map is deterministic.

\subsection{The orthogonalization fix}
Figure~\ref{fig:pipeline} shows the full pipeline. The key modification to naive Q-SINDy is a single algebraic step inserted between library construction and sparse regression:
\begin{equation}
\label{eq:orth}
\Qperp \;=\; Q - P(P^\top P)^{-1}P^\top Q.
\end{equation}
This projects $Q$ onto the orthogonal complement of the polynomial column space. Replacing $Q$ with $\Qperp$ in Eq.~\eqref{eq:qsindy}, we obtain \emph{orthogonalized Q-SINDy}.

\begin{figure}[t]
\centering
\includegraphics[width=0.95\linewidth]{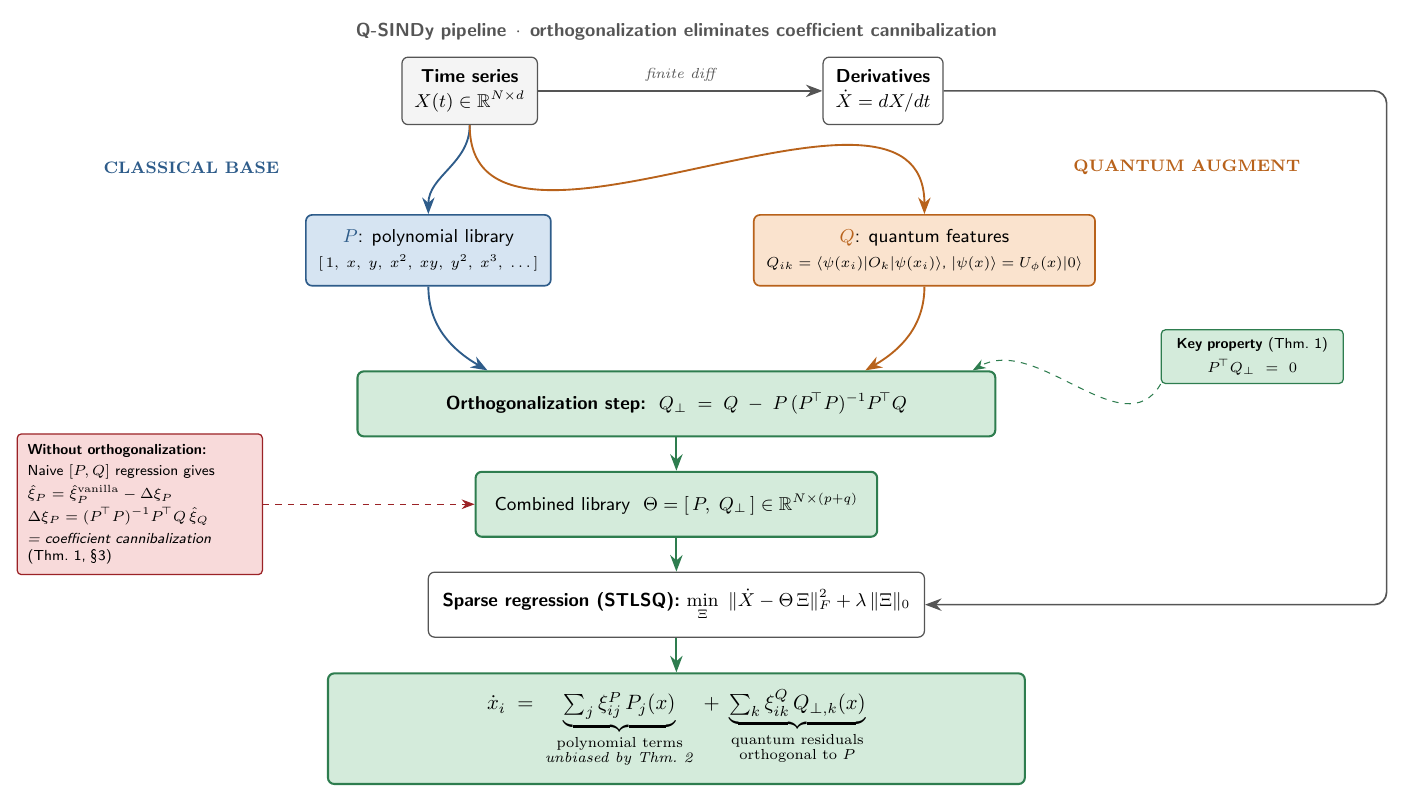}
\caption{The Q-SINDy pipeline with the orthogonalization step. The classical polynomial library $P$ and quantum feature library $Q$ are combined, but $Q$ is first projected onto the orthogonal complement of $P$'s column space. This eliminates the coefficient-cannibalization bias (red callout) by construction; polynomial coefficients in the output remain unbiased. Without this step (straight concatenation $[P,Q]$), polynomial coefficients are biased by $\Delta\xi_P = (P^\top P)^{-1}P^\top Q\,\hat\xi_Q$.}
\label{fig:pipeline}
\end{figure}

\section{Theory: Coefficient Cannibalization and Its Cure}
\label{sec:theory}
We work in the least-squares setting (the non-sparse relaxation of Eq.~\eqref{eq:qsindy}) to derive exact formulae. The sparse regression we actually run (STLSQ) inherits the phenomenon: its active-set phase is a block least-squares problem to which Theorems~\ref{thm:bias}--\ref{thm:orth} apply directly, as Proposition~\ref{prop:stlsq} below formalizes.

\begin{theorem}[Coefficient cannibalization bias]
\label{thm:bias}
Let $\hat\xi_P, \hat\xi_Q$ minimize $\|\dxdt - [P,Q]\,\Xi\|_F^2$. Then the polynomial coefficients satisfy
\begin{equation}
\label{eq:biasformula}
\hat\xi_P \;=\; \hat\xi_P^{\,\mathrm{vanilla}} \;-\; \underbrace{(P^\top P)^{-1}\,P^\top Q\,\hat\xi_Q}_{\Delta\xi_P},
\end{equation}
where $\hat\xi_P^{\,\mathrm{vanilla}} = (P^\top P)^{-1}P^\top \dxdt$ is the vanilla SINDy fit.
\end{theorem}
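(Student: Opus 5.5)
The plan is to read the identity directly off the normal equations of the block least-squares problem; no earlier result is needed beyond the definitions. Throughout I assume $P^\top P$ is invertible, i.e.\ $P$ has full column rank, as the statement implicitly requires. I will not need uniqueness of $\hat\xi_Q$: the formula should hold for \emph{any} minimizer $(\hat\xi_P,\hat\xi_Q)$, even if $[P,Q]$ is rank-deficient.

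\textbf{Step 1.} Write $\Xi = [\xi_P^\top, \xi_Q^\top]^\top$. The objective $\|\dxdt - P\xi_P - Q\xi_Q\|_F^2$ is a convex quadratic, so any minimizer satisfies the first-order conditions. Differentiating in the $\xi_P$ block gives
\[
P^\top\bigl(\dxdt - P\hat\xi_P - Q\hat\xi_Q\bigr) = 0 .
\]
Because the Frobenius norm decouples across the columns of $\dxdt$, this holds column by column, and it is equally valid for the matrix-valued coefficients.

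\textbf{Step 2.} Rearrange to $P^\top P\,\hat\xi_P = P^\top\dxdt - P^\top Q\,\hat\xi_Q$. Multiplying on the left by $(P^\top P)^{-1}$ gives
\[
\hat\xi_P = (P^\top P)^{-1}P^\top\dxdt - (P^\top P)^{-1}P^\top Q\,\hat\xi_Q .
\]
The first term is, by definition, $\hat\xi_P^{\,\mathrm{vanilla}}$, which is exactly Eq.~\eqref{eq:biasformula}.

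The $\xi_Q$ block of the normal equations is never used; this is why the argument needs nothing about $Q$. There is no real obstacle. The only point needing care is that the invertibility assumption concerns $P$ alone, which is what makes the bias formula hold even when the augmented library $[P,Q]$ is singular.

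As an optional remark, the same computation shows that $\Delta\xi_P = 0$ whenever $P^\top Q = 0$. This anticipates Theorem~\ref{thm:orth}, since $P^\top\Qperp = 0$ by construction in Eq.~\eqref{eq:orth}.
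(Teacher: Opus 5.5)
Your proof is correct and matches the paper's: both take the top block $P^\top P\,\hat\xi_P + P^\top Q\,\hat\xi_Q = P^\top\dxdt$ of the normal equations and multiply by $(P^\top P)^{-1}$. Your added points are also accurate and make explicit an assumption the paper leaves implicit: only $P^\top P$ needs to be invertible, and the formula holds for any minimizer even when $[P,Q]$ is rank-deficient.
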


\begin{proof}
The block-matrix normal equations for $\min\|\dxdt - [P,Q]\Xi\|_F^2$ are
\[
\begin{pmatrix} P^\top P & P^\top Q\\ Q^\top P & Q^\top Q\end{pmatrix}
\begin{pmatrix} \hat\xi_P\\ \hat\xi_Q\end{pmatrix}
=
\begin{pmatrix} P^\top \dxdt\\ Q^\top \dxdt\end{pmatrix}.
\]
The top block gives $P^\top P\,\hat\xi_P + P^\top Q\,\hat\xi_Q = P^\top\dxdt$. Solving for $\hat\xi_P$ yields Eq.~\eqref{eq:biasformula}.
\end{proof}

\begin{remark}
Theorem~\ref{thm:bias} identifies two necessary conditions for cannibalization: $(i)\ P^\top Q \neq 0$ so that quantum features are not orthogonal to $P$, and $(ii)\ \hat\xi_Q \neq 0$ so that quantum features have explanatory power for $\dxdt$. Both are typical in practice because (i) quantum feature maps evaluated on trajectory data have large overlap with polynomial bases (empirically $>0.9$ for most systems we study; see Table~\ref{tab:overlaps}), and (ii) if $Q$ had no explanatory power we would not have augmented the library at all.
\end{remark}

\begin{theorem}[Orthogonalization eliminates cannibalization]
\label{thm:orth}
Define $\Qperp = Q - P(P^\top P)^{-1}P^\top Q$. Replacing $Q$ with $\Qperp$ in the augmented least squares yields
\begin{equation}
\hat\xi_P^{\,\mathrm{orth}} \;=\; \hat\xi_P^{\,\mathrm{vanilla}}.
\end{equation}
That is, the polynomial coefficients in orthogonalized Q-SINDy are identical to those of vanilla SINDy, independent of any $\hat\xi_Q$ obtained on the quantum residual.
\end{theorem}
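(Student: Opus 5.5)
The plan is to apply Theorem~\ref{thm:bias} directly, with $Q$ replaced by $\Qperp$, and show that the bias term vanishes identically. Write $\Pi_P = P(P^\top P)^{-1}P^\top$ for the orthogonal projector onto the column space of $P$; this uses the standing assumption that $P$ has full column rank, which is already implicit in the vanilla formula. Then $\Qperp = (I-\Pi_P)Q$. The first step is the observation
\[
P^\top \Qperp = P^\top Q - P^\top P(P^\top P)^{-1}P^\top Q = 0 .
\]
So the off-diagonal blocks of the Gram matrix of $[P,\Qperp]$ vanish, and $[P,\Qperp]$ spans the same column space as $[P,Q]$.

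Next, invoke Theorem~\ref{thm:bias} with $\Qperp$ in place of $Q$. For any minimizer $(\hat\xi_P^{\mathrm{orth}},\hat\xi_Q^{\mathrm{orth}})$ it gives
\[
\hat\xi_P^{\mathrm{orth}} = \hat\xi_P^{\mathrm{vanilla}} - (P^\top P)^{-1}P^\top\Qperp\,\hat\xi_Q^{\mathrm{orth}} .
\]
The correction term is zero by the first step, whatever value $\hat\xi_Q^{\mathrm{orth}}$ takes. One can also argue directly from the top block of the normal equations: it reads $P^\top P\,\hat\xi_P + 0 = P^\top\dxdt$. This is exactly the vanilla normal equation, so it decouples from the quantum block.

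The only point needing care, and the main (mild) obstacle, is that $\Qperp$ may be rank-deficient. Quantum features that lie largely inside $\mathrm{col}(P)$, which is the typical case according to the Remark, leave near-zero residual columns. Then $\Qperp^\top\Qperp$ is singular and $\hat\xi_Q$ is not unique. I would handle this by noting that the proof never inverts $\Qperp^\top\Qperp$. Every minimizer satisfies the block normal equations, and the top block alone pins down $\hat\xi_P$ uniquely, given that $P^\top P$ is invertible. This is precisely the ``independent of any $\hat\xi_Q$'' clause of the statement.

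For completeness, I would also remark that the bottom block $\Qperp^\top\Qperp\,\hat\xi_Q = \Qperp^\top\dxdt$ is a least-squares fit of the vanilla residual $(I-\Pi_P)\dxdt$ onto $\Qperp$. This follows from $\Qperp^\top\dxdt = \Qperp^\top(I-\Pi_P)\dxdt$, and it justifies the phrase ``quantum residual'' in the statement. Finally, the argument applies column by column to $\dxdt$, so the Frobenius-norm (multi-output) case follows immediately.
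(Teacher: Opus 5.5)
Your proposal is correct and follows the paper's proof. Both arguments rest on $P^\top\Qperp = 0$, which makes the off-diagonal blocks vanish so that the top block of the normal equations reduces to the vanilla equation $P^\top P\,\hat\xi_P = P^\top\dxdt$; your added point that this holds for every minimizer even when $\Qperp^\top\Qperp$ is singular is a worthwhile clarification that the paper leaves implicit.
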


\begin{proof}
By construction, $P^\top \Qperp = P^\top Q - P^\top P(P^\top P)^{-1}P^\top Q = 0$. The off-diagonal blocks of the normal equations vanish, the system decouples, and
$\hat\xi_P^{\,\mathrm{orth}} = (P^\top P)^{-1}P^\top \dxdt = \hat\xi_P^{\,\mathrm{vanilla}}$.
\end{proof}

\begin{proposition}[STLSQ preservation]
\label{prop:stlsq}
Let $\mathcal{A} \subseteq \{1,\dots,p+q\}$ denote the active column set at any iterate of STLSQ, and partition $\mathcal{A} = \mathcal{A}_P \cup \mathcal{A}_Q$ into polynomial and quantum indices. With the orthogonalized library $[P,\Qperp]$, the restricted-active-set least squares problem $\min_{\Xi_{\mathcal{A}}}\|\dxdt - \Hmat_{:,\mathcal{A}}\Xi_{\mathcal{A}}\|_F^2$ satisfies
\begin{equation}
\hat\xi_P^{\mathcal{A},\,\mathrm{orth}} \;=\; (P_{\mathcal{A}_P}^\top P_{\mathcal{A}_P})^{-1} P_{\mathcal{A}_P}^\top \dxdt,
\end{equation}
i.e.\ the polynomial active-set coefficients equal those that vanilla STLSQ would return on the same support $\mathcal{A}_P$, independent of $\hat\xi_Q^\mathcal{A}$.
\end{proposition}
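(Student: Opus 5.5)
The plan is to reduce the restricted problem to the block structure already handled in Theorem~\ref{thm:orth}. Restricting to the active set selects a subset of polynomial columns, $P_{\mathcal{A}_P}$, and a subset of orthogonalized quantum columns, $(\Qperp)_{:,\mathcal{A}_Q}$. The crucial observation is that orthogonality survives column selection. From the proof of Theorem~\ref{thm:orth} we have $P^\top \Qperp = 0$ for the \emph{full} $P$. Any column of $P_{\mathcal{A}_P}$ is a column of $P$, and any column of $(\Qperp)_{:,\mathcal{A}_Q}$ is a column of $\Qperp$. So every entry of $P_{\mathcal{A}_P}^\top (\Qperp)_{:,\mathcal{A}_Q}$ is an entry of $P^\top\Qperp$, and hence
\[
P_{\mathcal{A}_P}^\top (\Qperp)_{:,\mathcal{A}_Q} = 0 .
\]
This is the step that makes the proposition nontrivial. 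Orthogonalization is done once, against the full column space of $P$, not against the current active subset. Because the full span contains the span of any subset, the decoupling holds at every iterate regardless of which polynomial terms have been thresholded away. Note that this direction matters: had we orthogonalized only against $P_{\mathcal{A}_P}$, pruning would not break it either, but the converse (orthogonalizing against a subset and then using the full $P$) would fail.

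Next I would write the block normal equations for the restricted problem, exactly as in the proof of Theorem~\ref{thm:bias}, with $P_{\mathcal{A}_P}$ and $(\Qperp)_{:,\mathcal{A}_Q}$ in place of $P$ and $Q$. By the displayed identity, the off-diagonal blocks vanish. The top block therefore reads
\[
P_{\mathcal{A}_P}^\top P_{\mathcal{A}_P}\,\hat\xi_P^{\mathcal{A},\mathrm{orth}} = P_{\mathcal{A}_P}^\top \dxdt .
\]
Solving gives the claimed formula, with no dependence on $\hat\xi_Q^{\mathcal{A}}$. Equivalently, this is Theorem~\ref{thm:bias} applied with $\Delta\xi_P$ equal to zero. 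The right-hand side is by definition the least-squares fit vanilla STLSQ performs on support $\mathcal{A}_P$. The argument applies column-by-column to each column of $\dxdt$, since the Frobenius objective separates over state components; the active sets may then differ per component without affecting anything.

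The main obstacle is well-posedness rather than algebra. The formula needs $P_{\mathcal{A}_P}^\top P_{\mathcal{A}_P}$ to be invertible. This follows from the standing assumption (implicit in Eq.~\eqref{eq:orth}) that $P$ has full column rank, because a column subset of a full-rank matrix is full rank.

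The quantum block $(\Qperp)_{:,\mathcal{A}_Q}$ may be rank deficient, for instance when $Q$ has components lying almost entirely in the span of $P$, so that $\Qperp$ columns nearly vanish. I would handle this by noting that the decoupled top block determines $\hat\xi_P^{\mathcal{A},\mathrm{orth}}$ uniquely whatever minimizer (e.g.\ minimum-norm) is chosen for $\hat\xi_Q^{\mathcal{A}}$. Indeed, the minimization separates as
\[
\|\dxdt - P_{\mathcal{A}_P}\xi_P\|^2 + \|(\Qperp)_{:,\mathcal{A}_Q}\xi_Q\|^2 - 2\,\dxdt^\top(\Qperp)_{:,\mathcal{A}_Q}\xi_Q ,
\]
where the cross term $\xi_P^\top P_{\mathcal{A}_P}^\top(\Qperp)_{:,\mathcal{A}_Q}\xi_Q$ is zero. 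Finally I would remark what the proposition does \emph{not} claim. The thresholding step may produce a different support $\mathcal{A}_P$ across iterations than vanilla STLSQ would, since thresholds act on the joint vector. The statement only asserts equality on a given common support, which is exactly what the decoupling delivers.
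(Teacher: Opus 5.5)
Your proof is correct and is the paper's argument: column selection preserves $P^\top\Qperp=0$, so the restricted normal equations decouple and the top block gives the vanilla fit on $\mathcal{A}_P$. Your full-column-rank and rank-deficient-$\Qperp$ remarks only make explicit what the paper leaves implicit, though one correction applies to your closing caveat: standard STLSQ thresholds each coefficient entrywise against a fixed $\lambda$, so the thresholds do not act on the joint vector, and induction over the iterates shows the polynomial support (and hence the returned polynomial coefficients) coincides with vanilla STLSQ's in exact arithmetic rather than possibly differing.
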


\begin{proof}
Since $P^\top \Qperp = 0$, any column subset $P_{\mathcal{A}_P}$ of $P$ and any column subset $(\Qperp)_{:,\mathcal{A}_Q}$ of $\Qperp$ also satisfy $P_{\mathcal{A}_P}^\top (\Qperp)_{:,\mathcal{A}_Q} = 0$. The restricted normal equations decouple block-wise by the same argument as Theorem~\ref{thm:orth}.
\end{proof}

\begin{remark}
Proposition~\ref{prop:stlsq} is the practically important statement: our experiments use STLSQ rather than unpenalized least squares, and the proposition guarantees that the unbiased-polynomial-coefficient property of Theorem~\ref{thm:orth} is preserved at every STLSQ iterate. When STLSQ converges to an active set $\mathcal{A}_P$ that contains the ground-truth support, the recovered polynomial coefficients are the vanilla STLSQ estimates on that same support—exactly as if quantum features had been omitted.
\end{remark}

\begin{figure}[t]
\centering
\includegraphics[width=0.95\linewidth]{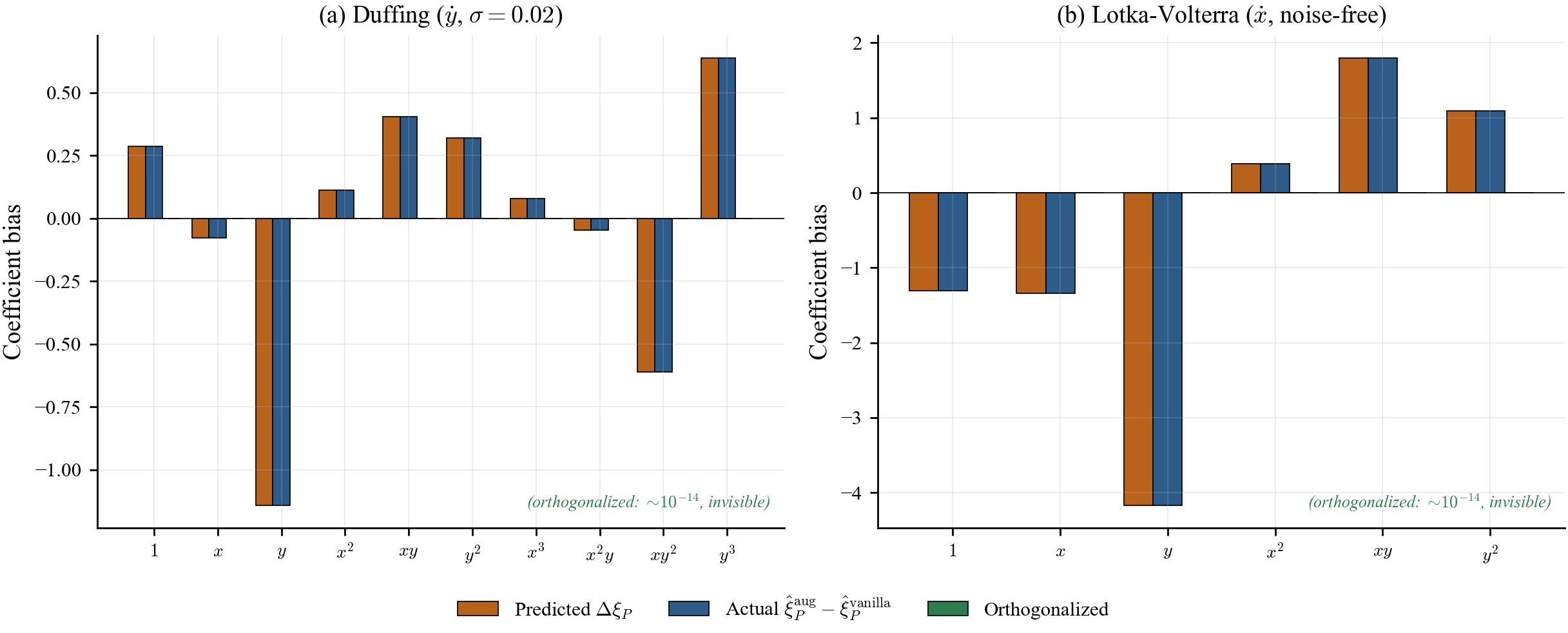}
\caption{\textbf{Theorem~\ref{thm:bias} verified to machine precision.} For Duffing (left) and Lotka-Volterra (right, noise-free), predicted bias (orange) matches observed bias (blue) per polynomial feature to relative error $<10^{-12}$. Orthogonalized coefficients (green, barely visible) coincide with vanilla to $<10^{-14}$, confirming Theorem~\ref{thm:orth}.}
\label{fig:theoryver}
\end{figure}

Figure~\ref{fig:theoryver} confirms the algebra numerically: predicted and observed polynomial-coefficient biases agree to relative error $<10^{-12}$, and the orthogonalized polynomial coefficients deviate from their vanilla counterparts by $\mathcal{O}(10^{-14})$---i.e., both theorems hold modulo floating-point error. Conceptually, orthogonalization restricts quantum features to the subspace orthogonal to $P$, where they can only explain variance in $\dxdt$ that $P$ itself cannot. The polynomial fit becomes decoupled from the quantum fit: no matter what the quantum coefficients do, they cannot shift polynomial coefficients. This is achievable at essentially zero cost---Eq.~\eqref{eq:orth} is a single projection computed once per fit.

\section{Experiments}
\label{sec:experiments}
We evaluate Q-SINDy across six dynamical systems, four methods (vanilla SINDy, RBF-augmented, naive Q-augmented, orthogonalized Q-SINDy), multiple noise levels, and three quantum feature-map architectures. Details (system parameters, initial conditions, circuit diagrams, STLSQ thresholds) are in Appendix~\ref{app:details} (Table~\ref{tab:systems}). All quantum circuits are simulated with PennyLane \citep{bergholm2018pennylane}. Results are reported as true-positive rate (TPR) of ground-truth term recovery with signs correct and coefficients within $50\%$ of true values.

\subsection{Dense noise sweep: the headline finding}
Figure~\ref{fig:dense} shows TPR curves across Duffing, Van der Pol, and Lorenz at 7 noise levels, $N=5$ trials each, for the four methods. Three patterns emerge.

\emph{(i) Cannibalization is real and severe.} On Duffing at $\sigma=0.02$, vanilla SINDy achieves TPR $=1.0$ while naive Q-augmentation drops to $0.40$---a 60\% degradation at the same noise level. At higher noise ($\sigma=0.12$), augmented TPR reaches $0.20$ while vanilla retains $0.65$. Lorenz shows the same pattern at its characteristic noise scale (Lorenz amplitudes are $\sim 10\times$ larger than Duffing's, so comparable relative perturbations appear at larger $\sigma$).

\emph{(ii) Orthogonalization fixes it exactly.} The green (orthogonalized) curve overlays the blue (vanilla) curve across all noise levels on all three systems, confirming Theorem~\ref{thm:orth} empirically.

\emph{(iii) RBF baseline performs worse, not better.} Red curves crash to TPR$=0$ at even mild noise, ruling out ``more features help'' as an explanation.

\begin{figure}[t]
\centering
\includegraphics[width=\linewidth]{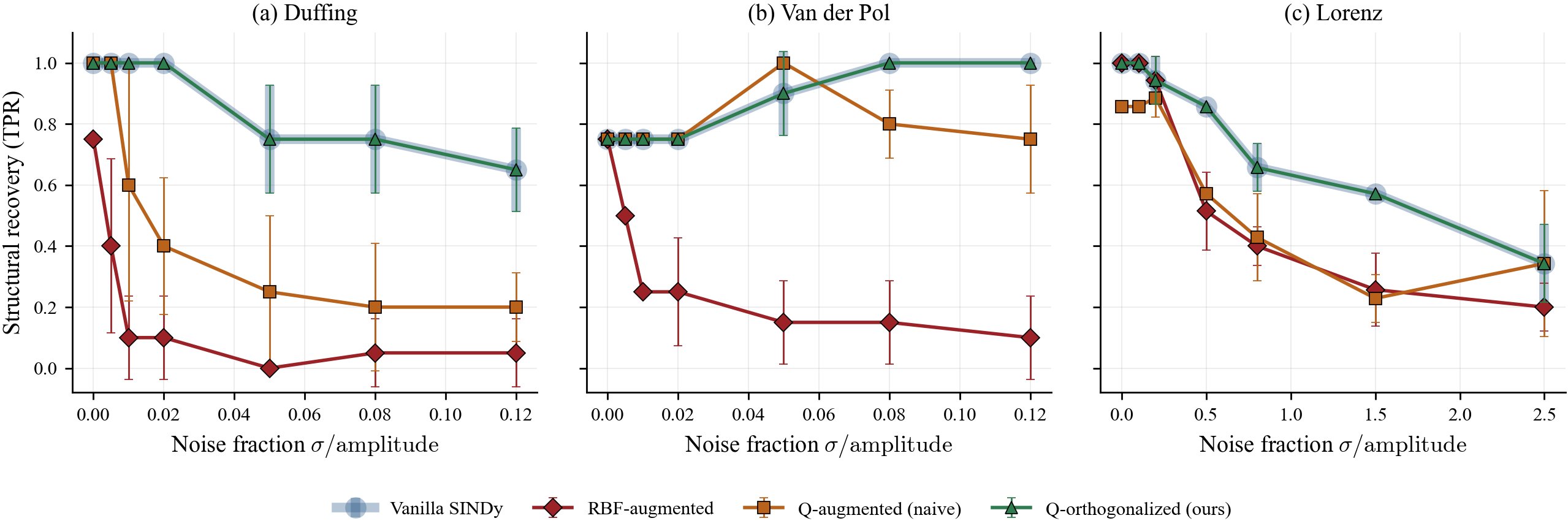}
\caption{\textbf{Dense noise sweep (N=5 trials).} Across Duffing, Van der Pol, and Lorenz, orthogonalized Q-SINDy (green) overlays vanilla SINDy (blue) at all noise levels, while naive Q-augmentation (orange) degrades significantly. RBF-augmented (red) baselines fail more severely, ruling out feature count as the mechanism.}
\label{fig:dense}
\end{figure}

\subsection{Generality across additional systems}
Figure~\ref{fig:newsys} repeats the sweep on three more systems: Lotka-Volterra, cubic oscillator, and R\"ossler. The cannibalization-plus-orthogonalization pattern persists. Particularly striking: on Lotka-Volterra, cannibalization appears even at $\sigma=0$---TPR$=0$ for Q-augmented at zero noise, recovered to $1.0$ by orthogonalization. On the cubic oscillator, cannibalization emerges immediately above $\sigma=0$, with Q-augmented TPR collapsing to $0$ at the first nonzero noise level ($\sigma=0.01$) while orthogonalized Q-SINDy retains TPR$=1.0$ until $\sigma=0.05$. Together these demonstrate that the failure mode is structural (coefficient geometry), not noise-induced.

\begin{figure}[t]
\centering
\includegraphics[width=\linewidth]{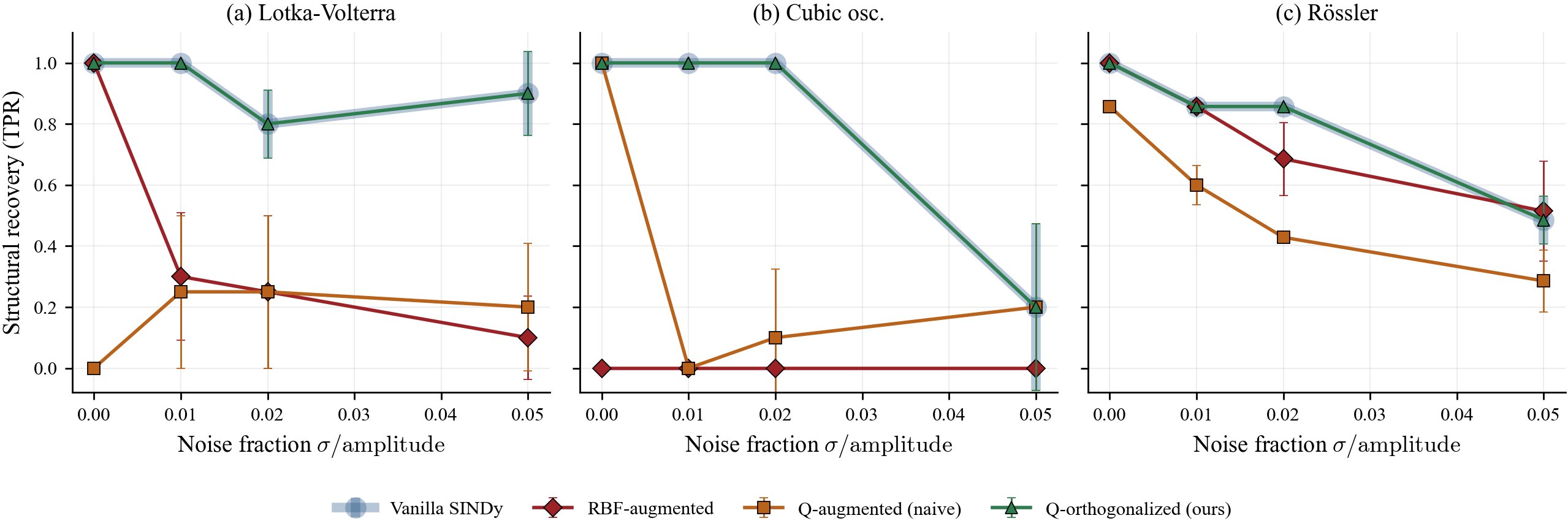}
\caption{\textbf{Three additional systems} (Lotka-Volterra, cubic oscillator, R\"ossler). Pattern persists: orthogonalized Q-SINDy (green) tracks vanilla (blue) while naive Q-augmentation (orange) degrades. Lotka-Volterra shows cannibalization at $\sigma=0$ itself; cubic oscillator from $\sigma=0.01$ onward, confirming the failure mode is coefficient-geometric rather than noise-driven.}
\label{fig:newsys}
\end{figure}

\subsection{Ruling out ``more features'': RBF hyperparameter robustness}
A natural objection is that cannibalization might simply reflect having too many features, making the RBF baseline too weak as tuned. To defuse this, we grid over RBF bandwidths $\gamma \in \{0.25, 0.5, 1, 2, 4\}\times\gamma_{\mathrm{median}}$ and landmark counts $\{3, 6, 12, 24\}$—20 configurations on Duffing at $\sigma=0.05$. Figure~\ref{fig:rbfgrid} shows the TPR heatmap: the best RBF configuration achieves TPR$=0.50$ versus vanilla's $0.92$. All 20 configurations fail. The pathology is not attributable to feature count.

\begin{figure}[t]
\centering
\includegraphics[width=\linewidth]{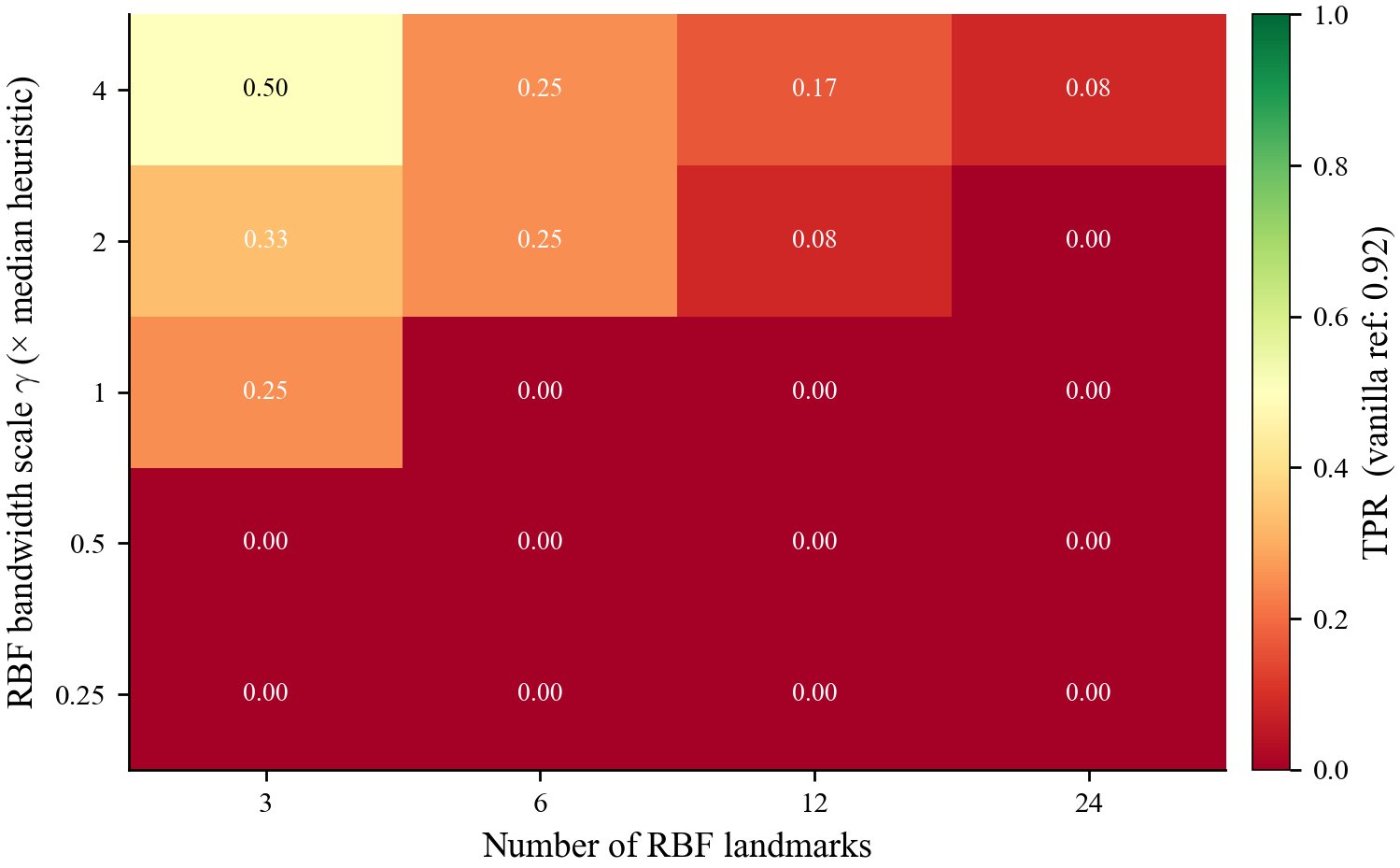}
\caption{\textbf{RBF hyperparameter sweep.} Across 20 configurations, the best RBF-augmented variant achieves TPR=0.50 vs.\ vanilla's 0.92. Most configurations give TPR$=0$. Feature count and bandwidth are not the mechanism.}
\label{fig:rbfgrid}
\end{figure}

\subsection{Refined dynamics-aware diagnostic}
\label{sec:r2q}
A predictive diagnostic would be valuable: given a candidate feature map, can we \emph{predict} in advance whether cannibalization will occur? A first attempt, $\mathrm{frac\_variance\_in\_P} = \|P(P^\top P)^{-1}P^\top Q\|_F^2 / \|Q\|_F^2$, measures column-space overlap. Across 10 (system, feature-map) combinations we found this correlates with observed cannibalization at Pearson $r=0.55$ (not significant, $p=0.098$).

A refined diagnostic uses the \emph{dynamics}:
\begin{equation}
\label{eq:r2q}
R^2_Q \;=\; 1 - \frac{\|\dxdt - Q(Q^+\dxdt)\|_F^2}{\|\dxdt - \overline{\dxdt}\|_F^2}.
\end{equation}
This is the $R^2$ of a linear regression of $\dxdt$ on $Q$ alone. High $R^2_Q$ means quantum features can predict the time derivative well on their own—the precise condition (per Theorem~\ref{thm:bias}) for $\hat\xi_Q$ to be large and for cannibalization to occur.

\begin{figure}[t]
\centering
\includegraphics[width=\linewidth]{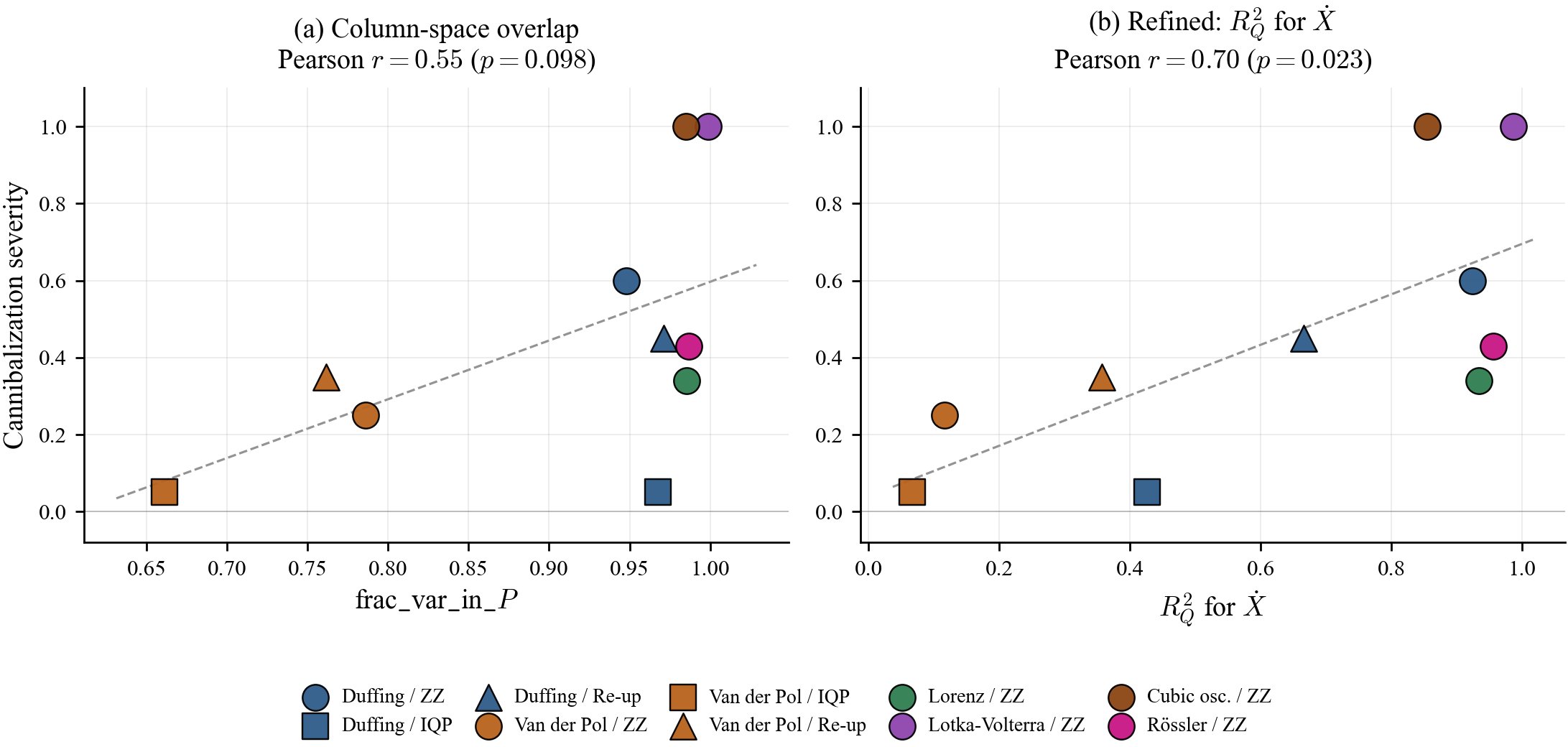}
\caption{\textbf{Dynamics-aware $R^2_Q$ diagnostic vs.\ column-space overlap.} Across 10 (system, feature-map) combinations, the refined $R^2_Q$ diagnostic (right) predicts cannibalization severity with Pearson $r=0.70$ ($p=0.023$), versus $r=0.55$ ($p=0.098$) for column-space overlap (left). The previously anomalous Duffing/IQP case (square marker, low cannibalization despite high overlap) aligns with the $R^2_Q$ prediction.}
\label{fig:r2q}
\end{figure}

Figure~\ref{fig:r2q} shows the comparison. The refined diagnostic raises correlation to $r=0.70$ ($p=0.023$)—statistically significant. It also correctly handles a previously anomalous case (Duffing with IQP feature map), where column-space overlap was high but cannibalization was negligible; $R^2_Q$ correctly predicts the low cannibalization.

\paragraph{Prospective validation.} The in-sample correlation could in principle be inflated by post-hoc selection of the diagnostic form. To check, we perform exhaustive leave-$k$-out cross-validation on the ten $(\text{system}, \text{feature-map})$ combinations, fitting a linear predictor of cannibalization severity from each diagnostic on $N{-}k$ points and measuring mean absolute error on the held-out $k$. Table~\ref{tab:mae} reports results for $k\in\{1,2,3\}$, averaged exhaustively over all $\binom{10}{k}$ splits. The $R^2_Q$ predictor achieves $7\%$–$11\%$ lower held-out MAE than column-space overlap across all $k$, confirming that the correlation generalizes rather than overfitting.

\begin{table}[h]
\centering
\small
\caption{Held-out MAE (cannibalization severity) under exhaustive leave-$k$-out cross-validation.}
\label{tab:mae}
\begin{tabular}{@{}cccc@{}}
\toprule
$k$ & Splits & Column-space overlap MAE & $R^2_Q$ MAE \\
\midrule
1 & 10  & 0.246 & \textbf{0.228} \\
2 & 45  & 0.249 & \textbf{0.232} \\
3 & 120 & 0.266 & \textbf{0.237} \\
\bottomrule
\end{tabular}
\end{table}

\subsection{Generality across quantum feature-map architectures}
We test whether the cannibalization+orthogonalization pattern generalizes beyond the ZZ-angle encoding. On Duffing and Van der Pol, we swap in IQP and data re-uploading feature maps. Table~\ref{tab:overlaps} reports $R^2_Q$ and observed cannibalization severity for each (system, feature-map) combination; orthogonalized Q-SINDy matches vanilla across all six ZZ/IQP/re-up variants on these two systems, with residual polynomial-coefficient deviations $\mathcal{O}(10^{-14})$ in every case. The mechanism is not specific to any particular quantum architecture. Circuit specifications for each feature map appear in Appendix~\ref{app:fmaps}.

\subsection{Robustness to hardware noise}
\label{sec:hwnoise}
A common concern with quantum-feature methods is noise sensitivity. We simulate realistic hardware noise using PennyLane's \texttt{default.mixed} device with depolarizing channels of strength $p$ after each gate. Figure~\ref{fig:hwnoise} shows Duffing TPR at $\sigma_{\mathrm{obs}}=0.02$ as $p$ varies from 0 to 0.02 (a 6-gate circuit has cumulative error rate $\sim 12\%$ at $p=0.02$). Cannibalization severity is essentially unchanged; orthogonalization remains perfect.

\begin{figure}[t]
\centering
\includegraphics[width=0.65\linewidth]{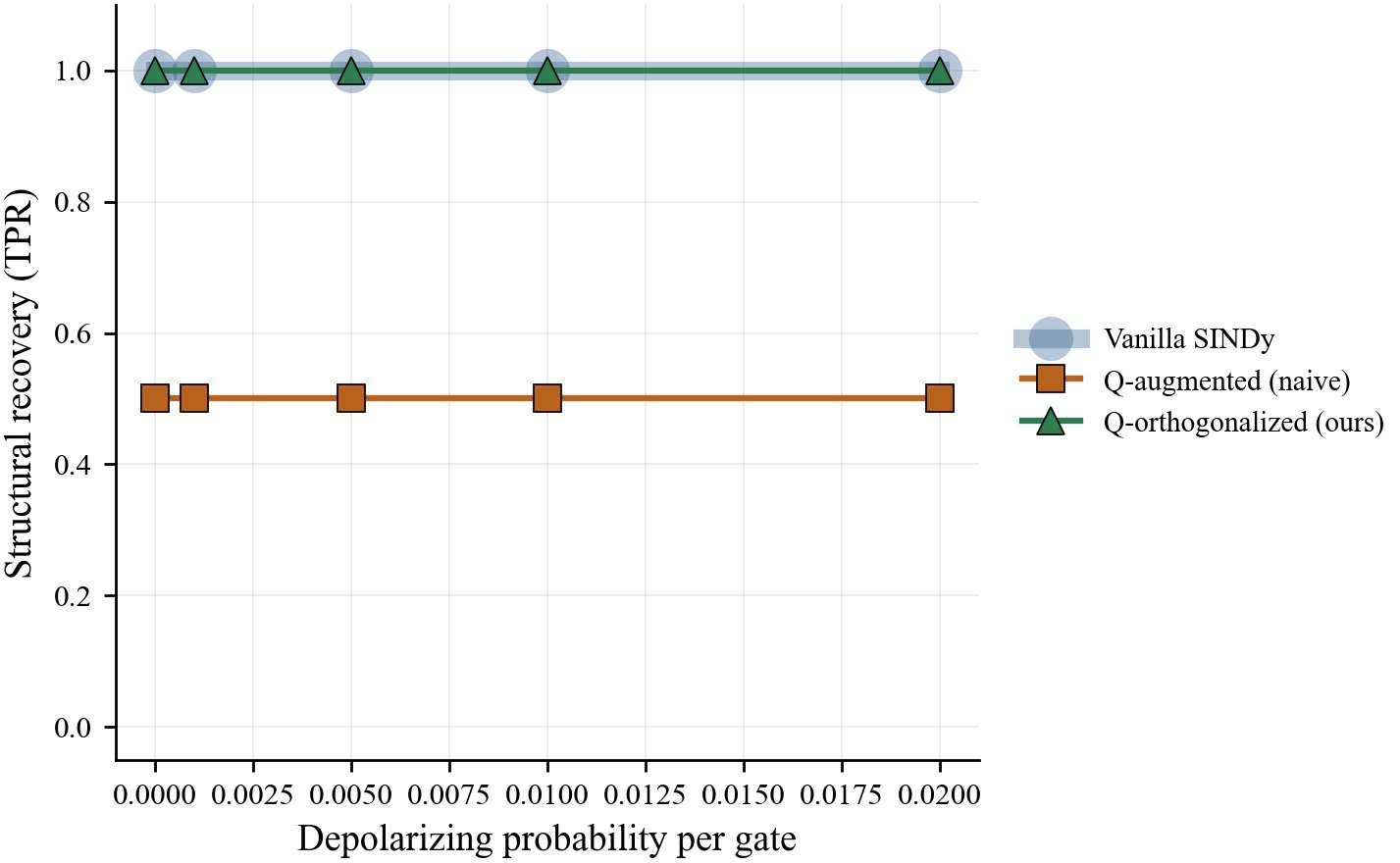}
\caption{\textbf{Hardware-noise robustness.} Under depolarizing channels up to 2\% per gate (realistic NISQ regime), orthogonalized Q-SINDy retains vanilla TPR; naive augmentation retains its cannibalization.}
\label{fig:hwnoise}
\end{figure}

\subsection{PDE extension: Burgers' equation}
\label{sec:burgers}
To demonstrate the framework extends beyond ODEs, we apply Q-SINDy to Burgers' equation $u_t = \nu u_{xx} - u u_x$ with $\nu=0.1$. We build a spatial feature vector $(u, u_x, u_{xx})$ at each grid point and use the same 3-qubit quantum circuit. Vanilla SINDy recovers $u_t = 0.0999\,u_{xx} - 1.0009\,u u_x$ at $\sigma=0$ (within $0.1\%$ of ground truth). The diagnostic gives $R^2_Q = 0.118$, below the cannibalization regime, correctly predicting that naive augmentation and orthogonalization will both recover ground truth—which they do (TPR$=1.0$ for both). This is the first validation of the $R^2_Q$ diagnostic's predictions \emph{in the absence} of cannibalization.

\section{Discussion and Limitations}
\label{sec:disc}
We have identified coefficient cannibalization as a failure mode of quantum-augmented SINDy and introduced a principled geometric fix with theoretical guarantees and extensive empirical validation. The orthogonalization step is computationally free (one projection per fit), trivially compatible with any sparse regression solver, and preserves the quantum features' ability to capture residual dynamics not expressible in $P$.

\paragraph{When does orthogonalization matter?} Theorem~\ref{thm:bias} gives the cannibalization bias as $\Delta\xi_P = (P^\top P)^{-1} P^\top Q\,\hat\xi_Q$, a product of three factors. The bias vanishes in either of two limiting cases: \emph{(i)} $P^\top Q \approx 0$, meaning quantum features are already nearly orthogonal to $P$'s column space (no cannibalization precondition), or \emph{(ii)} $\hat\xi_Q \approx 0$, meaning quantum features have no independent explanatory power for $\dxdt$ (no coefficient mass to cannibalize). Our experiments exemplify case~(ii): both Van der Pol ($R^2_Q=0.12$, overlap $0.79$) and Burgers' equation (Sec.~\ref{sec:burgers}, $R^2_Q=0.12$) have low dynamics-relevant content in $Q$, and correspondingly exhibit minimal cannibalization (VdP severity $+0.25$; Burgers zero). Pure case~(i) is harder to reach with trajectory data because trajectory columns typically overlap strongly with polynomial features. Orthogonalization is most valuable in the both-nontrivial regime, where quantum features are neither geometrically orthogonal to $P$ nor dynamically inert, and naive augmentation silently degrades equation recovery.

\paragraph{Relation to other uses of ``orthogonality'' in quantum SciML.} Recent work on quantum neural operators for PDEs integrates orthogonal quantum layers (QOrthoNN) into DeepONet, achieving linear-in-input-dimension inference complexity via unary encoding and pyramidal RBS-gate circuits~\citep{xiao2025quantumdeeponet}. Their orthogonality is a property of \emph{weight matrices} within a quantum neural network and delivers a circuit-counting computational advantage at inference. Our orthogonality is different in both content and purpose: a classical Gram--Schmidt projection of one feature matrix against another at fit time, used to eliminate a coefficient-bias mechanism rather than to reduce asymptotic complexity. The two contributions are complementary---feature-library design (ours) and efficient forward-pass architecture (theirs)---and could in principle be combined in a single quantum SciML pipeline.

\paragraph{Hardware scaling considerations.} Our experiments use 2--3 qubit circuits that are classically simulable in microseconds; hardware execution at these scales would run slower, not faster, than simulation. The orthogonalization technique is substrate-independent: it applies identically whether $Q$ is generated by a larger quantum device, a classical kernel, or a neural network. Scaling to qubit counts where classical simulation of $Q$ becomes intractable ($\gtrsim 20$ qubits with $\omega(\log n)$-depth IQP-style circuits, where classical-hardness guarantees apply) is an engineering effort rather than a methodological one. Whether quantum features at such scales provide equation-discovery benefits not obtainable classically---i.e., a \emph{representation} advantage beyond the complexity advantage established for quantum kernels on separate tasks---is an open empirical question that our small-scale experiments cannot settle. The hardware-noise results (Sec.~\ref{sec:hwnoise}) demonstrate the method's portability to current NISQ devices; a real-hardware demonstration at 2--3 qubits is the natural near-term validation step.

\paragraph{Limitations.} \emph{(1)} All quantum feature maps tested (ZZ, IQP with 2 layers, data re-uploading) are classically simulable at the qubit counts we use; we make no claim of quantum computational advantage. Our contribution is methodological. \emph{(2)} The $R^2_Q$ diagnostic was validated on $N=10$ $(\text{system}, \text{feature-map})$ points; while leave-$k$-out cross-validation (Table~\ref{tab:mae}) indicates the relationship generalizes within this set, scaling to 30+ combinations would tighten the statistical claim. \emph{(3)} We demonstrate framework extension on one PDE (Burgers); cannibalization did not occur there because $R^2_Q$ was low. A PDE where cannibalization manifests---reaction--diffusion and Kuramoto--Sivashinsky are candidates---remains future work. \emph{(4)} All experiments use STLSQ; whether cannibalization manifests equivalently under LASSO, Weak-SINDy, or Bayesian SINDy is unverified, though Proposition~\ref{prop:stlsq}'s block-decoupling argument suggests orthogonalization should remain beneficial under any sparse regression that admits the same active-set normal equations.

\paragraph{Broader relevance.} The cannibalization mechanism is not quantum-specific. Its precondition---that the augmenting feature matrix has significant column-space overlap with the base library \emph{and} explanatory power for the target---holds for any kernel-augmented sparse regression, including RBF, neural, and random-feature libraries. Our RBF baseline already shows that classical kernel augmentations fail on similar systems. The orthogonalization fix and the $R^2_Q$ diagnostic apply identically. We have demonstrated the effect in the quantum setting where it was surfaced by our initial experiments, and have chosen to keep this paper focused there; whether it quantitatively matters in neural-feature-augmented SINDy---for instance, in SINDy-autoencoder variants~\citep{champion2019data}---is an open question our work directly suggests. Downstream, reliable equation-discovery primitives matter for engineering applications where interpretability is a first-class requirement alongside accuracy: safety-critical systems increasingly pair physics-based simulators with data-driven surrogates in digital-twin architectures~\citep{roy2025foundation, kobayashi2025sequential}, and recovered symbolic dynamics can be verified, audited, and composed with existing physics codes in ways that black-box surrogates cannot.

\section{Conclusion}
We introduced Q-SINDy, a quantum-kernel-augmented sparse identification framework, and showed that naive implementation suffers from systematic coefficient cannibalization that corrupts equation recovery. We derived the exact bias and proved that a simple orthogonalization step at library-construction time eliminates it. Experimental validation across six dynamical systems, three quantum feature-map architectures, an RBF classical baseline across 20 hyperparameters, realistic hardware noise, and a PDE case confirms both the failure mode and the fix. The dynamics-aware $R^2_Q$ diagnostic predicts cannibalization in advance. We hope this work accelerates reliable deployment of quantum feature libraries in scientific-machine-learning pipelines.

{\small
\bibliography{references}
}

\clearpage
\appendix

\section{System specifications}
\label{app:details}

\begin{table}[h]
\centering
\small
\caption{Systems, parameters, and library configurations.}
\label{tab:systems}
\begin{tabular}{@{}lllll@{}}
\toprule
System & Equations & IC & Poly deg.\ & STLSQ thresh.\\
\midrule
Duffing & $\dot x=y,\ \dot y=-x-0.3x^3-0.1y$ & $[1,0]$ & 3 & 0.05\\
Van der Pol & $\dot x=y,\ \dot y=\mu(1-x^2)y-x,\ \mu=1$ & $[2,0]$ & 3 & 0.05\\
Lorenz~\citep{lorenz1963deterministic} & $\dot x=10(y-x),\ \dot y=x(28-z)-y,\ \dot z=xy-\tfrac{8}{3}z$ & $[1,1,1]$ & 2 & 0.1\\
Lotka-Volterra & $\dot x=\tfrac{2}{3}x-\tfrac{4}{3}xy,\ \dot y=xy-y$ & $[1,1]$ & 2 & 0.05\\
Cubic osc. & $\dot x=y,\ \dot y=-x^3$ & $[1,0]$ & 3 & 0.05\\
R\"ossler~\citep{rossler1976equation} & $\dot x=-y-z,\ \dot y=x+0.2y,\ \dot z=0.2+z(x-5.7)$ & $[1,1,1]$ & 2 & 0.1\\
\bottomrule
\end{tabular}
\end{table}

\begin{table}[h]
\centering
\small
\caption{Feature overlap and cannibalization severity across (system, feature-map) combinations. $R^2_Q$ is the dynamics-aware diagnostic from Section~\ref{sec:r2q}.}
\label{tab:overlaps}
\begin{tabular}{@{}llccc@{}}
\toprule
System & Feature map & frac\_var\_in\_$P$ & $R^2_Q$ & Cannibalization\\
\midrule
Duffing & ZZ & 0.948 & 0.924 & $+0.60$\\
Duffing & IQP & 0.968 & 0.426 & $+0.05$\\
Duffing & Re-up & 0.971 & 0.665 & $+0.45$\\
Van der Pol & ZZ & 0.786 & 0.116 & $+0.25$\\
Van der Pol & IQP & 0.661 & 0.067 & $+0.05$\\
Van der Pol & Re-up & 0.762 & 0.357 & $+0.35$\\
Lorenz & ZZ & 0.985 & 0.934 & $+0.34$\\
Lotka-Volterra & ZZ & 0.999 & 0.986 & $+1.00$\\
Cubic osc. & ZZ & 0.985 & 0.855 & $+1.00$\\
R\"ossler & ZZ & 0.987 & 0.956 & $+0.43$\\
\bottomrule
\end{tabular}
\end{table}

\section{Feature-map details}
\label{app:fmaps}

\paragraph{ZZ angle-encoded (2-qubit).} Circuit: $R_X(x_0)$ on $q_0$, $R_X(x_1)$ on $q_1$; $\mathrm{CNOT}_{0,1}$, $R_Z(x_0 x_1)$ on $q_1$, $\mathrm{CNOT}_{0,1}$; then $R_Y(x_i)$ rotations. Six observables measured: $\langle Z_0\rangle, \langle Z_1\rangle, \langle X_0\rangle, \langle X_1\rangle, \langle Z_0Z_1\rangle, \langle X_0X_1\rangle$.

\paragraph{ZZ angle-encoded (3-qubit).} Analogous 3-qubit generalization with pairwise ZZ entanglers in a ring, yielding 9 observables. Input data is rescaled by $\pi/(2\cdot\max|X|)$ for the 3D systems (Lorenz, R\"ossler, Burgers).

\paragraph{IQP.} Two layers of $H^{\otimes 2}$, $R_Z(x_0), R_Z(x_1)$, $R_{ZZ}(x_0 x_1)$. Measurements as for ZZ.

\paragraph{Data re-uploading.} Three layers of the form [$R_X(x_i)$ encoding, fixed variational gates $R_Z, R_Y, R_X$ with parameters specified in code, $\mathrm{CNOT}$]. Variational parameters are fixed (not trained) to make the map deterministic.

\section{Orthogonalization algorithm}
\label{app:algo}

\begin{algorithm}[H]
\caption{Orthogonalized Q-SINDy}
\begin{algorithmic}[1]
\Require Time series $X\in\RR^{N\times d}$, timestep $dt$, polynomial degree $D$, quantum feature map $U_\phi$, sparsity threshold $\lambda$
\State $\dxdt \gets$ SmoothedFiniteDifference$(X, dt)$
\State $P \gets$ PolynomialFeatures$(X, D)$
\State $Q \gets [\langle O_k\rangle_{U_\phi(x_i)|0\rangle}]_{i,k}$
\State $A \gets (P^\top P)^{-1} P^\top Q$ \Comment{projection matrix}
\State $\Qperp \gets Q - P A$ \Comment{orthogonalization}
\State $\Hmat \gets [P,\; \Qperp]$
\State $\Xi \gets \mathrm{STLSQ}(\Hmat, \dxdt, \lambda)$
\State \Return $\Xi$
\end{algorithmic}
\end{algorithm}

\end{document}